\newcommand{\dom}{\mbox{dom}}
\newtheorem{theorem}{Theorem}[section]
\newtheorem{proposition}[theorem]{Proposition}
\newtheorem{definition}[theorem]{Definition}
\newtheorem{remark}[theorem]{Remark}
\newtheorem{notation}[theorem]{Notation}
\newcommand{\jump}{\mathsf{jump}}
\newcommand{\inv}{\mathsf{inv}}
\newcommand{\init}{\mathsf{init}}
\newcommand{\unsafe}{\mathsf{unsafe}}
\newcommand{\lrf}{\mathcal{L}_{\mathbb{R}_{\mathcal{F}}}}
\newcommand{\prunefwd}{\mathrm{Prune}_{\mathrm{fwd}}}
\newcommand{\prunebwd}{\mathrm{Prune}_{\mathrm{bwd}}}
\newcommand{\prunetime}{\mathrm{Prune}_{\mathrm{time}}}
\newcommand{\BXt}{\ensuremath{B_{\vec{x_t}}}}
\newcommand{\BXz}{\ensuremath{B_{\vec{x_0}}}}
\DeclareMathOperator{\Hull}{Hull}
\title{Satisfiability Modulo ODEs}
\author{Sicun Gao \and Soonho Kong \and Edmund Clarke}
\begin{document}
\maketitle

\begin{abstract}
We study SMT problems over the reals containing ordinary differential equations. They are important for formal verification of realistic hybrid systems and embedded software. We develop $\delta$-complete algorithms for SMT formulas that are purely existentially quantified, as well as $\exists\forall$-formulas whose universal quantification is restricted to the time variables. We demonstrate scalability of the algorithms, as implemented in our open-source solver {\sf dReal}, on SMT benchmarks with several hundred nonlinear ODEs and variables.
 \end{abstract}



\section{Introduction}

Hybrid systems tightly combine finite automata and continuous dynamics. In most cases, the continuous components are specified by ordinary differential equations (ODEs). Thus, formal verification of general hybrid systems requires reasoning about logic formulas over the reals that contain ODE constraints. This problem is considered very difficult and has not been investigated in the context of decision procedures until recently~\cite{Eggers2008,Eggers2011,DBLP:journals/sttt/IshiiUH11}. It is believed that current techniques are not powerful enough to handle formulas that arise from formal verification of realistic hybrid systems, which typically contain many nonlinear ODEs and other constraints.

Since the first-order theory over the reals with trigonometric functions is already undecidable, solving formulas with general ODEs seems inherently impossible. We have resolved much of this theoretical difficulty by proposing the study of {\em $\delta$-complete} decision procedures for such formulas~\cite{DBLP:conf/lics/GaoAC12}. An algorithm is $\delta$-complete for a set of SMT formulas, where $\delta$ is an arbitrary positive rational number, if it correctly decides whether a formula is unsatisfiable or $\delta$-satisfiable. Here, a formula is $\delta$-satisfiable if, under some {\em $\delta$-perturbations}, a syntactic variant of the original formula is satisfiable~\cite{DBLP:conf/cade/GaoAC12}. We have shown that $\delta$-complete decision procedures are suitable for various formal verification tasks~\cite{DBLP:conf/cade/GaoAC12,DBLP:conf/lics/GaoAC12}. We have also proved that $\delta$-complete decision procedures exist for SMT problems over the reals with Lipschitz-continuous ODEs. Such results serve as a theoretical foundation for developing practical decision procedures for the SMT problem.

In this paper we study practical $\delta$-complete algorithms for SMT formulas over the reals with ODEs. We show that such algorithms can be made powerful enough to scale to realistic benchmark formulas with several hundred nonlinear ODEs.

We develop decision procedures for the problem following a standard DPLL(ICP) framework, which relies on constraint solving algorithms as studied in Interval Constraint Propagation (ICP)~\cite{handbookICP}. In this framework, for any ODE system we can consider its solution function $\vec x_t = \vec f(t, \vec x_0)$ as a constraint between the initial variables $\vec x_0$, time variable $t$, and the final state variables $\vec x_t$. We define pruning operators that take interval assignments on $\vec x_0$, $t$, and $\vec x_t$ as inputs, and output refined interval assignments on these variables. We formally prove that the proposed algorithms are $\delta$-complete. Beyond standard SMT problems where all variables are existentially quantified, we also study $\exists\forall$-formulas under the restriction that the universal quantifications are limited to the time variables (we call them $\exists\forall^t$-formulas). Such formulas have been an obstacle in SMT-based verification of hybrid systems~\cite{DBLP:conf/fmcad/CimattiMT12,DBLP:conf/aaai/CimattiMT12}. 

In brief, this paper makes the following contributions:

\begin{itemize}
\item We formalize the SMT problem over the reals with general Lipschitz-contiunous ODEs, and illustrate its expressiveness by encoding various standard problems concerning ODEs: initial and boundary value problems, parameter synthesis problems, differential algebraic equations, and bounded model checking of hybrid systems. In some cases, $\exists\forall^t$-formulas are needed.
\item We propose algorithms for solving SMT with ODEs, using ODE constraints to design pruning operators in a branch-and-prune framework. We handle both standard SMT problems with only existentially quantified variables, as well as $\exists\forall^t$-formulas. We prove that the algorithms are $\delta$-complete.
\item We demonstrate the scalability of the algorithms, as implemented in our open-source solver {\sf dReal}~\cite{DBLP:conf/cade/GaoKC13}, on realistic benchmarks encoding formal verification problems for several nonlinear hybrid systems.
\end{itemize}

The paper is organized as follows. In Section~\ref{language}, we define the SMT problem with ODEs and show how it can encode various standard problems with ODEs. In Section~\ref{algorithms}, we propose algorithms in the DPLL(ICP) framework for solving fully existentially quantified formulas as well as $\exists\forall^t$ formulas. In Section~\ref{experiments} we show experimental results.

\paragraph{Related Work.} 
Solving real constraints with ODEs has a wide range of applications, and much previous work exists for classes with special structures in different paradigms~\cite{DBLP:conf/cp/CruzB03,Granvilliers:2005:PEU:1039891.1039931,Lin07guaranteedstate}. Recently~\cite{DBLP:conf/cp/GoldsztejnMEH10} proposed a more general constraint solving framework, focusing on the formulation of the problem in the standard CP framework. On the SMT solving side, several authors have considered logical combinations of ODE constraints and proposed partial decision procedures~\cite{Eggers2008,Eggers2011,DBLP:journals/sttt/IshiiUH11}. We aim to extend and formalize existing algorithms for a general SMT theory with ODES, and formally prove that they can be made $\delta$-complete. In terms of practical performance, the proposed algorithms are made scalable to various benchmarks that contain hundreds of nonlinear ODEs and variables. 

\section{SMT over the Reals with ODEs}~\label{language}
\vspace{-0.8cm}
\subsection{Computable Real Functions}
As studied in Computable Analysis~\cite{CAbook,Kobook}, we can encode real numbers as infinite strings, and develop a computability theory of real functions using Turing machines that perform operations using oracles encoding real numbers.  We briefly review definitions and results of importance to us. Throughout the paper we use $||\cdot||$ to denote the max norm $||\cdot||_{\infty}$ over $\mathbb{R}^n$ for various $n$. First, a {\em name} of a real number is a sequence of rational numbers converging to it:
\begin{definition}[Names]
A name of $a\in \mathbb{R}$ is any function $\mathcal{\gamma}_a: \mathbb{N}\rightarrow \mathbb{Q}$ that satisfies:  for any $i\in \mathbb{N}$, $|\gamma_a(i) - a|<2^{-i}$. For $\vec a\in \mathbb{R}^n$, $\gamma_{\vec a}(i) = \langle \gamma_{a_1}(i), ..., \gamma_{a_n}(i)\rangle$. We write the set of all possible names for $\vec a$ as $\Gamma(\vec a)$.
\end{definition}

Next, a real function $f$ is {\em computable} if there is a Turing machine that can use any argument $x$ of $f$ as an oracle, and compute the value of $f(x)$ up to an arbitrary precision $2^{-i}$, where $i\in\mathbb{N}$. Formally: 
\begin{definition}[Computable Functions]\label{comp_functions}
We say $f:\subseteq\mathbb{R}^n\rightarrow \mathbb{R}$ is computable if there exists an oracle Turing machine $\mathcal{M}_f$ such that for any $\vec x\in \dom(f)$, any name $\gamma_{\vec x}$ of $\vec x$, and any $i\in \mathbb{N}$, the machine uses $\gamma_{\vec x}$ as an oracle and $i$ as an input to compute a rational number $M_f^{\gamma_{\vec x}}(i)$ satisfying $|M_f^{\gamma_{\vec x}}(i) - f(\vec x)|<2^{-i}.$
\end{definition}

The definition requires that for any $\vec x\in \dom(f)$, with access to an arbitrary oracle encoding the name $\gamma_{\vec x}$ of $\vec x$, $M_f$ outputs a $2^{-i}$-approximation of $f(\vec x)$. In other words, the sequence $M_f^{\gamma_{\vec x}}(1), M_f^{\gamma_{\vec x}}(2), ...$ is a name of $f(\vec x)$. Intuitively, $f$ is computable if an arbitrarily good approximation of $f(\vec x)$ can be obtained using any good enough approximation to any $\vec x\in\dom(f)$.
Most common continuous real functions are computable~\cite{CAbook}. Addition, multiplication, absolute value, $\min$, $\max$, $\exp$, $\sin$. Compositions of computable functions are computable. In particular, solution functions of Lipschitz-continuous ordinary differential equations are computable, as we explain next. 

\subsection{Solution Functions of ODEs}

We now show that the framework of computable functions allows us to consider solution functions of ODE systems.
\begin{notation}
We use $\vec x= \vec y$ between $n$-dimensional vectors to denote the system of equations $x_i=y_i$ for $1\leq i\leq n$.
\end{notation}
Let $D\subseteq \mathbb{R}^n$ be compact and $g_i: D\rightarrow \mathbb{R}$ be $n$ Lipschitz-continuous functions, which means that for some constant $c_i\in \mathbb{R}^+$ ($1 \leq i \leq n$), for all
 $\vec x_1, \vec x_2\in D$,
$$|g_i(\vec x_1)-g_i(\vec x_2)|\leq c_i||\vec x_1-\vec x_2||.$$
Let $t$ be a variable over $\mathbb{R}$. We consider the first-order autonomous ODE system
\begin{eqnarray}\label{ivpfirst}
\frac{d \vec y}{dt} = \vec g(\vec y(t, \vec x_0)) \mbox{ and }\vec y(0, \vec x_0) = \vec x_0
\end{eqnarray}
where $\vec x_0\in D$. Here, each
\begin{eqnarray}\label{ivp_solution_first}
y_i: \mathbb{R}\times D\rightarrow \mathbb{R}
\end{eqnarray}
is called the $i$-th solution function of the ODE system (\ref{ivpfirst}). A key result in computable analysis is that these solution functions are computable, in the sense of Definition~\ref{comp_functions}: 
\begin{proposition}[\cite{Kobook}]
The solution functions $\vec y$ in the form of (\ref{ivp_solution_first}) of the ODE system (\ref{ivpfirst}) are computable over $\mathbb{R}\times D$.
\end{proposition}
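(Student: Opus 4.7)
The plan is to approximate $\vec y(t, \vec x_0)$ via Picard iteration and to control three separate sources of error: (i) the finite precision with which $t$ and $\vec x_0$ are read from their oracles, (ii) the truncation of the Picard sequence after finitely many steps, and (iii) the rational-arithmetic implementation of each Picard operator evaluation. Since each $g_i$ is Lipschitz and computable on the compact set $D$, it is bounded and $\vec g$ can be called as a subroutine to any required precision $2^{-\ell}$.

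The core estimate is that the Picard operator $(\Phi \vec u)(s) = \vec x_0 + \int_0^s \vec g(\vec u(\tau))\, d\tau$ yields iterates $\vec u^{(k)} = \Phi^k \vec x_0$ satisfying a factorially decaying bound
\[
\bigl\| \vec u^{(k)}(t) - \vec y(t, \vec x_0) \bigr\| \;\le\; \frac{M (cT)^k}{k!},
\]
where $M$ bounds $\|\vec g\|$ on $D$, $c = \max_i c_i$, and $T \ge |t|$ is determined from $\gamma_t$. Given target precision $i$, I choose $k$ large enough to drive this below $2^{-(i+2)}$. The Turing machine then (a) reads $\vec x_0$ and $t$ to precision $2^{-m}$; (b) discretizes $[0, t]$ into $N$ subintervals and computes $\vec u^{(k)}$ by nested Riemann-sum quadrature, calling $\mathcal{M}_{\vec g}$ at rational nodes to precision $2^{-\ell}$. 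Per-step errors and the $O(N^{-1})$ quadrature error propagate through the $k$ nested integrals with a multiplicative Grönwall factor of at most $e^{cT}$, so choosing $m, N, \ell$ polynomially in $i, k, cT, M$ forces errors (i) and (iii) each below $2^{-(i+2)}$ and yields total output error at most $2^{-i}$.

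The main obstacle will be ensuring that all iterates stay in a region on which $\vec g$ is defined with the stated Lipschitz constant, since nothing a priori prevents a noisy iterate from drifting out of $D$. I would handle this by working on a slightly enlarged compact $D' = D + \bar B(0, \rho)$ (extending $\vec g$ Lipschitz-continuously if needed) and verifying inductively that once the cumulative analytic and numerical error at each iteration level stays below $\rho/2$, every intermediate $\vec u^{(j)}$ lies in $D'$, so both the $e^{cT}$ propagation bound and the subroutine $\mathcal{M}_{\vec g}$ remain valid throughout. Feeding this containment constraint back into the choice of $m, N, \ell, k$ closes the argument and produces the oracle Turing machine witnessing computability of $\vec y$ over $\mathbb{R} \times D$.
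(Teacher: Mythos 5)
Your proposal is correct and follows essentially the same route as the paper, which merely invokes the Picard--Lindel\"of integral form and the remark that its right-hand side can be approximated by finite sums, deferring the details to the cited reference. Your error decomposition (oracle precision, Picard truncation via the factorial bound, quadrature error with a Gr\"onwall propagation factor) together with the Lipschitz extension to a slightly enlarged compact domain supplies exactly the standard details that the paper's sketch omits.
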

To see why this is true, recall that for any $t\in\mathbb{R}$ and $\vec x_0\in D$, the value of the solution function follows the Picard-Lindel\"of form:
$$\vec y(t,\vec x_0) = \int_{0}^t \vec g(\vec y(s,\vec x_0))ds + \vec x_0.$$
Approximations of the right-hand side of the equation can be computed by finite sums, theoretically up to an arbitrary precision.

\subsection{SMT Problems and $\delta$-Complete Decision Procedures}

We now let $\mathcal{F}$ denote an arbitrary collection of computable real functions, which can naturally contain solution functions of ODE systems in the form of (\ref{ivp_solution_first}). Let $\mathcal{L}_{\mathcal{F}}$ denote the first-order signature $\langle \mathcal{F}, <\rangle$, where constants are seen as 0-ary functions in $\mathcal{F}$. Let $\mathbb{R}_{\mathcal{F}}$ be the  structure $\langle \mathbb{R}, \mathcal{F}^{\mathbb{R}}, <^{\mathbb{R}}\rangle$ that interprets $\lrf$-formulas in the standard way. We focus on formulas whose variables take values from bounded domains, which can be defined using bounded quantifiers:
\begin{definition}[Bounded Quantifiers]
The bounded quantifiers $\exists^{[u,v]}$ and $\forall^{[u,v]}$ are defined as
\begin{align*}
\exists^{[u,v]}x.\varphi &=_{df}\exists x. ( u \leq x \land x \leq v \wedge
\varphi),\\
\forall^{[u,v]}x.\varphi &=_{df} \forall x. ( (u \leq x \land x \leq v)
\rightarrow \varphi),
\end{align*}
where $u$ and $v$ denote $\lrf$ terms, whose variables only
contain free variables in $\varphi$ excluding $x$. It is easy to check that
$\exists^{[u,v]}x. \varphi \leftrightarrow \neg \forall^{[u,v]}x. \neg\varphi$.
\end{definition}
The key definition in our framework is {\em $\delta$-variants} of first-order formulas: 
\begin{definition}[$\delta$-Variants]\label{variants}
Let $\delta\in \mathbb{Q}^+\cup\{0\}$, and $\varphi$ a bounded
$\lrf$-sentence of the standard form
$$\varphi: \ Q_1^{I_1}x_1\cdots Q_n^{I_n}x_n\;\psi[t_i(\vec x)>0;
t_j(\vec x)\geq 0],$$ where $i\in\{1,...k\}$ and $j\in\{k+1,...,m\}$.  Note that negations are represented by sign changes on the terms. The {\em
$\delta$-weakening} $\varphi^{\delta}$ of $\varphi$ is
defined as the result of replacing each atom $t_i > 0$ by $t_i >
-\delta$ and $t_j \geq 0$ by $t_j \geq -\delta$. That is,
$$\varphi^{-\delta}:\ Q_1^{I_1}x_1\cdots Q_n^{I_n}x_n\;\psi[t_i(\vec x)>-\delta; t_j(\vec x)\geq -\delta].$$
\end{definition}
The SMT problem is standardly defined as deciding satisfiability of quantifier-free formulas, which is equivalent to deciding the truth value of fully existentially quantified sentences. We will also consider formulas that are partially universally quantified. Thus, we consider both $\Sigma_1$ and $\Sigma_2$ formulas here.
\begin{definition}[Bounded $\Sigma_1$- and $\Sigma_2$-SMT Problems]
A $\Sigma_1$-SMT problem is a formula of the form
$$\exists^{I_1}x_1\cdots\exists^{I_n}x_n.\varphi(\vec x)$$
and a $\Sigma_2$-SMT problem is of the form
$$\exists^{I_1}x_1\cdots\exists^{I_n}x_n\forall^{I_{n+1}}x_{n+1}\cdots\forall^{I_{m}}x_m.\varphi(\vec x).$$
In both cases $\varphi(\vec x)$ is a quantifier-free $\lrf$-formula.
\end{definition}
\begin{definition}[$\delta$-Completeness~\cite{DBLP:conf/cade/GaoAC12}]
Let $S$ be a set of $\lrf$ formulas, and $\delta\in \mathbb{Q}^+$. We say a decision procedure $A$ is $\delta$-complete for $S$, if for any $\varphi\in S$, $A$ correctly returns one of the following answers

$\bullet$ $\varphi$ is false;

$\bullet$ $\varphi^{-\delta}$ is true.

If the two cases overlap, either one is correct.
\end{definition}
We have proved in~\cite{DBLP:conf/lics/GaoAC12} that $\delta$-complete decision procedures exists for arbitrary bounded $\lrf$-sentences. In particular, there exists $\delta$-complete decision procedures for the bounded $\Sigma_1$ and $\Sigma_2$ SMT problems. This serves as the theoretical foundation as well as a correctness requirement for the practical algorithms that we will develop in the following sections.

\subsection{SMT Encoding of Standard Problems with ODEs}
\label{encoding}
In this section, we list several standard problems related to ODE systems and show that they can be easily encoded and generalized through SMT formulas. They motivate the development of decision procedures for the theory. 
\begin{remark}
In all the following cases, solutions to the standard problems are obtained from witnesses for the existentially quantified variables in the SMT formulas.
\end{remark}
\begin{remark}
In the definitions below, when the solution functions $\vec y$ of ODE systems are written as part of a formula, no analytic forms are needed. They are functions included in the signature $\lrf$. 
\end{remark}
\noindent {\bf Generalized Initial Value Problems.} Given an ODE system, the standard initial value problem asks for a solution of the variables at certain time, given a complete assignment to the initial conditions of the system. In the form of SMT formulas, we easily allow the initial conditions to be constrained by arbitrary quantifier-free $\lrf$-formulas:
\begin{definition}[Generalized IVP]
Let $X\subseteq \mathbb{R}^n$ be a compact domain, $T\in \mathbb{R}^+$, and $\vec y: [0,T]\times X\rightarrow X$ be the computable solution functions of an ODE system.  Let $t\in [0,T]$ be an arbitrary constant that represents a time point of interest. The generalized IVP problem is defined by formulas of the form:
$$\exists^{X} x_0\exists^{X} \vec x.\; \varphi(\vec x_0)\wedge \vec x = \vec y(t,\vec x_0),$$
where $\varphi$ is a quantifier-free $\lrf$-formula constraining the initial states $\vec x_0$, and $\vec x$ is the needed value for time point $t$. 
\end{definition}
\noindent{\bf Generalized Boundary Value Problems.} Given an ODE system, the standard boundary value problem is concerned with computing the computable solution function when part of the variables are assigned values at the beginning of flow, and part of the variables as assigned values at the end of the flow. A generalized version as encoded by SMT formulas is:
\begin{definition}[Generalized BVP]
Let $X\subseteq \mathbb{R}^n$ be a compact domain, $T\in \mathbb{R}^+$, and $\vec y: [0,T]\times X\rightarrow X$ be the solution functions of an ODE system. Let $t, t'\in [0,T]$ be two time points of interest. The generalized BVP problem is:
$$
\exists^{X} x_0\exists^{X}\vec x_t\exists^X \vec x.\varphi(\vec x_{0},\vec x_t,t)\wedge \vec x_t = \vec y(t,\vec x_0)\wedge \vec x = \vec y(t',\vec x_0)
$$
where $\varphi$ is a quantifier-free $\lrf$-formula that specifies the boundary conditions. Note that $\vec x$ is the value that we are interested in solving in the chosen time point $t'$.
\end{definition}

\noindent{\bf Data-Fitting and Parameter Synthesis.}
The data fitting problem is the following. Suppose an ODE system has part of its parameters unspecified. Given a sequence of data $(t_1, \vec a_1), ..., (t_k, \vec a_k)$, we need to find the values of the missing parameters of the original ODE system. More formally:
\begin{definition}[Data-Fitting Problems]
Let $X\subseteq \mathbb{R}^n$ and $P\subseteq \mathbb{R}^m$ be compact domains, $T\in \mathbb{R}^+$, and $\vec y(\vec p): [0,T]\times X\rightarrow X$ be the solution functions of an ODE system, where $\vec p\in P$ be a vector of parameters. Let $(t_1,\vec a_1), ..., (t_k, \vec a_k)$ be a sequence of pairs in $[0,T]\times X$. The data-fitting problem is defined by:
$$\exists^P \vec p\exists^X x_0.\; \varphi(\vec x_0)\wedge \vec a_1 = \vec y(\vec p, t_1, \vec x_0)\wedge \cdots \wedge \vec a_k = \vec y(\vec p, t_k, \vec x_0),$$
where a quantifier-free $\varphi$ constraints the initial states $\vec x_0$. 
\end{definition}

\noindent{\bf Differential Algebraic Equations.} DAE problems combine ODEs and algebraic constraints:
\begin{eqnarray}
\frac{d \vec y}{dt} &=& \vec g(\vec y(t, \vec y_0), \vec z)\label{ode_dae}\\
0 &=& \vec h(\vec y, \vec z, t)\label{dae_eqn}
\end{eqnarray}
where $\vec y, \vec y_0\in \mathbb{R}^n$, $\vec z\in \mathbb{R}^m$. To express the problem in $\lrf$, we need to use extra universal quantification to ensure that the algebraic relations hold throughout the time duration. Again, we can also generalize the equation in (\ref{dae_eqn}) to an arbitrary quantifier-free $\lrf$-formula. The problem is encoded as:
\begin{definition}[DAE Problems]
Let $X\subseteq \mathbb{R}^n$ be a compact domain, $T\in \mathbb{R}^+$, and $\vec y: [0,T]\times X\times X\rightarrow X$ be the computable solution functions of the ODE system in (\ref{ode_dae}) parameterized by $\vec z$. Let $h$ be defined by (\ref{dae_eqn}). Let $t\in [0,T]$ be a time point of interest. A DAE problem is defined by the following formula:
$$
\exists^X \vec x_0 \exists^X \vec x \exists^Z \vec z \forall^{[0,t]} t'.\;
\varphi(\vec x_0)\wedge \vec x = \vec y (t, \vec x_0, \vec z) \wedge h(\vec y(\vec x_0, t'), \vec z, t')=0
$$
where a quantifier-free $\varphi$ specifies the initial conditions for $\vec y$, and $\vec x$ is the needed value at time point $t$. 
\end{definition}

\noindent{\bf Bounded Model Checking of Hybrid Systems.} Bounded model checking problems for hybrid systems can be naturally encoded as SMT formulas with ODEs~\cite{Eggers2008,Eggers2011,DBLP:journals/sttt/IshiiUH11,DBLP:conf/fmcad/CimattiMT12,DBLP:conf/aaai/CimattiMT12}. We consider a simple hybrid system to show an example. Let $H$ be an $n$-dimensional 2-mode hybrid system. In mode 1, the flow of the system follows an ODE system whose solution function is $\vec y_1(t, \vec x_0)$, and in mode 2, it follows another solution function $\vec y_2(t, \vec x_0)$. The jump condition from mode 1 to mode 2 is specified by $\jump(\vec x, \vec x')$. The invariants are specified by $\inv_i(\vec x)$ and for mode $i$. Let $\unsafe(\vec x)$ denote an unsafe region. Let the continuous variables be bounded in $X$ and time be bounded in $[0,T]$. Now, if $H$ starts from mode 1 with initial states satisfying $\init(\vec x)$, it can reach the unsafe region after one discrete jump from mode 1 to mode 2, iff the following formula is true:
{\small
\begin{multline*}
\exists^X \vec x_1 \exists^X \vec x_1^t\ \exists^X \vec x_2\exists^X \vec x_2^t\ \exists^{[0,T]}t_1\exists^{[0,T]}t_2\ \forall^{[0,t_1]}t_1'\forall^{[0,t_2]}t_2'.\\
\init(\vec x_1)\wedge\vec x_1^t = \vec y_1(t_1,\vec x_1)\wedge \inv_1(\vec y_1(t_1', \vec x_1))\wedge \jump(\vec x_1^t, \vec x_2)\\
\wedge \vec x_2^t = \vec y_2(t_2, \vec x_2)\wedge \inv_2(\vec y_2(t_2', \vec x_2))\wedge \unsafe(\vec x_2^t).
\end{multline*}
}The encoding can be explained as follows. For each mode, we use two variable vectors $\vec x_i$ and $\vec x_i^t$ to represent the continous flows. $\vec x_i$ denote the starting values of a flow, and $\vec x_i^t$ denotes the final values. In mode $1$, the flow starts with some values in the initial states, specified by $\init(\vec x_1)$. Then, we follow the continuous dynamics in mode $1$, so that $\vec x_1^t$ denotes the final value $\vec x_1^t = \vec y(t_1, \vec x_1)$. Then the system follows the jumping condition and resets the variables from $\vec x_1^t$ to $\vec x_2$ as specified by $\jump(\vec x_1^t, \vec x_2)$. After that, the system follows the flow in mode 2. In the end, we check if the final state $\vec x_2^t$ in mode 2 satisfies the unsafe predicate, $\unsafe(\vec x_2)$. 

\section{Algorithms}\label{algorithms}

\subsection{The ICP framework}

The method of Interval Constraint Propagation (ICP)~\cite{handbookICP} finds solutions of real constraints using a ``branch-and-prune'' method that performs constraint propagation of interval assignments on real variables. The intervals are represented by floating-point end-points. Only over-approximations of the function values are used, which are defined by interval extensions of real functions.
\begin{definition}[Floating-Point Intervals and Hulls]
Let $\mathbb{F}$ denote the finite set of all floating point numbers with symbols $-\infty$ and $+\infty$ under the conventional order $<$. Let
$$\mathbb{IF} = \{[a,b]\subseteq \mathbb{R}: a,b\in \mathbb{F}, a\leq b\}\mbox{ and } \mathbb{BF} = \bigcup_{n=1}^{\infty}\mathbb{IF}^n$$ denote the set of closed real intervals with floating-point endpoints, and the set of {\em boxes} with these intervals, respectively. When $S\subseteq \mathbb{R}^n$ is a set of real numbers, the hull of $S$ is:
$$\Hull(S) = \bigcap \{B\in \mathbb{BF}: S\subseteq B\}.$$
\end{definition}
\begin{definition}[Interval Extension~\cite{handbookICP}]
Suppose $f:\subseteq\mathbb{R}^n\rightarrow \mathbb{R}$ is a real function. An interval extension operator $\sharp(\cdot)$ maps $f$ to a function $\sharp f:\subseteq \mathbb{BF}\rightarrow \mathbb{IF}$, such that
for any $B\in \dom(\sharp f)$, it is always true that $\{f(\vec x):\vec x\in B\}\subseteq \sharp f(B).$
\end{definition}
\begin{algorithm}\label{algo1}
\caption{ICP($f_1,...,f_m, B_0 = I_1^0\times\cdots\times I_n^0, \delta$)}\label{icpalgo}
\begin{algorithmic}[1]
\Statex
    \State $S \gets B_0$
    \While{$S \neq \emptyset$}
        \State $B \gets S.\mathrm{pop}()$
        \While{$\exists 1 \leq i \leq m, B \neq_{\delta} \mathrm{Prune}(B,f_i)$}
        \State $B \gets \mathrm{Prune}(B, f_i)$
        \EndWhile
        \If{$B\neq \emptyset$}
            \If{$\exists 1\leq i\leq n, |\sharp f_i(B)|\geq \delta$}
                \State $\{B_1,B_2\} \gets \mathrm{Branch}(B, i)$
                \State $S.\mathrm{push}(\{B_1,B_2\})$
            \Else
                \State \Return {\sf sat}
            \EndIf
        \EndIf
    \EndWhile
    \State \Return {\sf unsat}
\end{algorithmic}
\end{algorithm}
ICP uses interval extensions of functions to ``prune'' out sets of points that are not in the solution set, and ``branch'' on intervals when such pruning can not be done, until a small enough box that may contain a solution is found. A high-level description of the decision version of ICP is given in Algorithm~\ref{icpalgo}. In Algorithm~\ref{icpalgo}, Branch$(B,i)$ is an operator that returns two smaller boxes $B' = I_1\times\cdots\times I_i'\times\cdots\times I_n$ and $B''=I_1\times \cdots\times I_i''\times \cdots\times I_n$, where $I_i\subseteq I_i'\cup I_i''$. The key component of the algorithm is the $\mathrm{Prune}(B, f)$ operation.  Any operation that contracts the intervals on variables can be seen as pruning, but for correctness we need formal requirements on the pruning operator in ICP.  Basically, we need to require that the interval extensions of the functions converge to the true values of the functions, and that the pruning operations are well-defined, as specified below.  
\begin{definition}[$\delta$-Regular Interval Extensions]
We say an interval extension $\sharp f$ of $f:\mathbb{R}^n\rightarrow \mathbb{R}$ is $\delta$-regular, if for some constant $c\in\mathbb{R}$, for any $B\in \mathbb{R}^n$, $|\sharp f(B)|\leq \max(c||B||, \delta)$. 
\end{definition}
\begin{definition}[Well-defined Pruning Operators~\cite{DBLP:conf/cade/GaoAC12}]\label{well}
Let $\mathcal{F}$ be a collection of real functions, and $\sharp$ be a $\delta$-regular interval extension operator on $\mathcal{F}$. A {\em well-defined (equality) pruning operator} with respect to $\sharp$ is a partial function $\mathrm{Prune}_{\sharp} : \subseteq \mathbb{BF}\times \mathcal{F}\rightarrow \mathbb{BF}$, such that for any $f\in \mathcal{F}$, $B,B'\in \mathbb{BF}$,
\begin{enumerate}
\item $\mathrm{Prune}_{\sharp}(B, f)\subseteq B$;
\item If $\mathrm{Prune}_{\sharp}(B,f)\neq \emptyset$, then $0\in \sharp f(\mathrm{Prune}_{\sharp}(B,f))$;
\item $B \cap \{\vec a\in \mathbb{R}^n: f(\vec a)=0\} \subseteq \mathrm{Prune}_{\sharp}(B, f)$.
\end{enumerate}
\end{definition}
When $\sharp$ is clear, we simply write $\mathrm{Prune}$. The rules can be explained as follows. (W1) ensures that the algorithm always makes progress. (W2) ensures that the result of a pruning is always a reasonable box that may contain a zero, and otherwise $B$ is pruned out. (W3) ensures that the real solutions are never discarded. We proved the following theorem in~\cite{DBLP:conf/cade/GaoAC12}:
\begin{theorem}
Algorithm~\ref{icpalgo} is $\delta$-complete if the pruning operators are well-defined. 
\end{theorem}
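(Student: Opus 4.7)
The plan is to split the claim into three parts: soundness of an \textsf{unsat} verdict, soundness of a \textsf{sat} verdict in the $\delta$-weakened sense, and termination. Each part leans on exactly one of the three well-definedness conditions of Definition~\ref{well}, with $\delta$-regularity of $\sharp$ playing a decisive role in the sat and termination arguments.

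For \textsf{unsat}, I would prove by induction on the operations performed by Algorithm~\ref{icpalgo} that the true zero set $Z=\{\vec x\in B_0 : f_1(\vec x)=\cdots=f_m(\vec x)=0\}$ is always contained in the union of the live boxes on the stack $S$. The base case is immediate since initially $S=\{B_0\}$. A pruning step $B\leftarrow \mathrm{Prune}(B,f_i)$ preserves the invariant by (W3), and a $\mathrm{Branch}$ step preserves it because the contract on $\mathrm{Branch}$ forces $B\subseteq B_1\cup B_2$. If the algorithm exits with $S=\emptyset$, then every generated box has been pruned to $\emptyset$, so $Z=\emptyset$ and the input conjunction is genuinely unsatisfiable, hence the \textsf{unsat} verdict is correct.

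For \textsf{sat}, suppose the algorithm halts on a non-empty box $B$ with $|\sharp f_i(B)|<\delta$ for every $i$. The inner loop exited, so $B =_\delta \mathrm{Prune}(B,f_i)$ for each $i$; in particular $\mathrm{Prune}(B,f_i)$ is non-empty, since otherwise it could not be $\delta$-close to the non-empty $B$ and the loop would have contracted $B$ to $\emptyset$, failing the subsequent emptiness check. Condition (W2) then gives $0\in \sharp f_i(\mathrm{Prune}(B,f_i))\subseteq \sharp f_i(B)$. Since $\sharp f_i(B)$ is an interval of diameter less than $\delta$ containing $0$, every $\vec x\in B$ satisfies $|f_i(\vec x)|<\delta$. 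Taking any $\vec x^\ast\in B$ and unfolding each atom $f_i=0$ into $f_i\geq 0\wedge -f_i\geq 0$, the point $\vec x^\ast$ witnesses the $\delta$-weakened atoms $f_i\geq -\delta\wedge -f_i\geq -\delta$, so $\varphi^{-\delta}$ holds.

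The main obstacle is termination, which is not apparent from the pseudocode alone. The inner loop is easily finite: (W1) makes pruning contractive, and the loop halts as soon as the contraction on every coordinate drops below $\delta$, giving a bound depending only on the width of the popped box. The outer loop is subtler. Here $\delta$-regularity is essential: there exists $c>0$ with $|\sharp f_i(B)|\leq \max(c\|B\|,\delta)$, so once $\mathrm{Branch}$ has shrunk $\|B\|$ below $\delta/c$ the branching test $|\sharp f_i(B)|\geq \delta$ fails for every $i$, and $B$ is either discarded by pruning or reported as \textsf{sat}. Assuming the standard ICP fairness property that $\mathrm{Branch}$ repeatedly bisects the widest coordinate, I would invoke K\"onig's lemma on the binary branching tree: every infinite path would produce boxes of widths shrinking to zero, eventually falling below $\delta/c$ and being consumed, so the branching tree is finite and the algorithm terminates.
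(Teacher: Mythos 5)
This theorem is not actually proved in the paper: it is imported by citation from the authors' earlier work (the sentence immediately preceding it reads ``We proved the following theorem in \ldots''), so there is no in-paper proof to compare against. Judged on its own, your decomposition into (i) soundness of \textsf{unsat} via the invariant ``the zero set is covered by the live boxes,'' maintained by (W3) and the contract on $\mathrm{Branch}$, (ii) soundness of \textsf{sat} in the $\delta$-weakened sense via (W2) together with the failed branching test $|\sharp f_i(B)|<\delta$, and (iii) termination via $\delta$-regularity, is exactly the standard argument for this result, and its overall structure is sound.

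Two steps deserve tightening. First, in the \textsf{sat} case you use $\sharp f_i(\mathrm{Prune}(B,f_i))\subseteq \sharp f_i(B)$. That is inclusion isotonicity of the interval extension, which the paper's definition of interval extension does \emph{not} require (it only demands that $\sharp f(B)$ contain the true range of $f$ on $B$), and without it you cannot transfer the membership $0\in\sharp f_i(\mathrm{Prune}(B,f_i))$ supplied by (W2) to the interval $\sharp f_i(B)$ on which the width bound $<\delta$ is known. You should either state isotonicity as a standing assumption (it is standard in ICP and is surely intended) or rephrase the conclusion so that both the ``contains $0$'' and ``width $<\delta$'' facts are established for the same box. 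Second, your outer-loop termination argument has a strict/non-strict mismatch: $\delta$-regularity only yields $|\sharp f_i(B)|\leq\max(c\|B\|,\delta)=\delta$ once $\|B\|\leq\delta/c$, and the branching guard is $|\sharp f_i(B)|\geq\delta$, so in principle the test can keep firing on arbitrarily small boxes. The clean way out, available in this paper's setup and avoiding both K\"onig's lemma and the fairness hypothesis on $\mathrm{Branch}$, is that $\mathbb{F}$ is finite: over the bounded initial box there are only finitely many boxes in $\mathbb{BF}$, each pruning iteration that the $\neq_\delta$ guard admits strictly shrinks the box, and each branch strictly splits an interval, so the entire computation tree is finite. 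Neither issue reflects a wrong idea, but both should be made explicit for the proof to be complete.
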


\subsection{ODE Pruning in an ICP Framework}

We now study the algorithms for SMT formulas with ODEs. The key is to design the appropriate pruning operators for the solution functions of ODE systems. The pruning operations here strengthen and formalize the ones proposed in~\cite{Eggers2008,Eggers2011,DBLP:conf/cp/GoldsztejnMEH10}, such that $\delta$-completeness can be proved. 

We recall some notations first. Let $D\subseteq \mathbb{R}^n$ be compact and $g_i: D\rightarrow D$ be $n$ Lipschitz-continuous functions. Given the first-order autonomous ODE system
\begin{eqnarray}\label{ivp2}
\frac{d \vec y}{dt} = \vec g(\vec y(t, \vec x_0)) \mbox{ and }\vec y(0, \vec x_0) = \vec x_0
\end{eqnarray}
where $\vec x_0\in D$, we write
\begin{eqnarray*}\label{ivp_solution}
y_i: [0, T]\times D\rightarrow D_i
\end{eqnarray*}
to represent the $i$-th solution function of the ODE system. The $\delta$-regular interval extension of $y_i$ is an interval function
$$\sharp y_i: (\mathbb{IF}\cap [0,T])\times (\mathbb{BF}\cap D) \rightarrow \mathbb{IF}$$
such that for a constant $c\in \mathbb{R}$, for any time domain $I_t\subseteq \mathbb{IF}\cap [0,T]$ and any box of initial values $B_{\vec x_0}\subseteq \mathbb{BF}\cap D$, we have
$$\{x_{t}\in \mathbb{R}: x_{t} = y_i(t, \vec x_0), \vec x_0\in B_{\vec x_0}, t\in I_t\}\subseteq \sharp y_i(I_t, B_{\vec x_0})$$
and
$$|\sharp y_i(I_t, B_{\vec x_0})|\leq \max(c\cdot||I_t\times B_{\vec x_0}||, \delta).$$
We will also need the notion of the {\em reverse} of the ODE system~(\ref{ivp2}), as defined by
\begin{eqnarray}\label{ivpi2}
\frac{d \vec y_-}{dt} = \vec g_-(\vec y_-(t, \vec x_t)) \mbox{ and }\vec y(0, \vec x_t) = \vec x_t.
\end{eqnarray}
Here, $\vec{g}_-$ is defined as $-\vec g$, the vector of functions consisting of the negation of each function in $\vec g$, which is equivalent to reversing time in the flow defined by the ODE system. That is, for $\vec x_0,\vec x_t\in D, t\in \mathbb{R}$, we always have
\begin{eqnarray}
\vec x_t = \vec y (t, \vec x_0)\mbox{ iff }\vec x_0 = \vec y_-(t,\vec x_t).
\end{eqnarray}
Naturally, we write $\sharp (y_-)_i$ to denote the $\delta$-regular interval extension of the $i$-th component of $\vec y_-$.
\begin{algorithm}\label{alg:BasicPruning}
\caption{$\mathrm{ODEPruning}(\sharp{\vec{y}}, \BXz, \BXt, I_t)$}\label{basic}
\begin{algorithmic}[1]
  \Repeat
  \State $\BXt' \gets \prunefwd(\sharp{\vec{y}}, \BXz, \BXt, I_t)$
  \State ${I'_t}   \gets \prunetime(\sharp{\vec{y}}, \BXz, \BXt', I_t)$
  \State $\BXz' \gets \prunebwd(\sharp{\vec{y}}, \BXz, \BXt', I'_t)$
  \Until{$\BXz = \BXz' \land \BXt = \BXt' \land I_t= {I'_t}$}
  \State \Return{$(\BXz', \BXt', {I'_t})$}
\end{algorithmic}
\end{algorithm}

The relation between the initial variables $\vec x_0$, the time duration $t$, and the flow variables $\vec x_t$ is specified by the constraint $\vec x_t = \vec y(t, \vec x_0)$. Given the interval assignment on any two of $\vec x_0$, $\vec x_t$, and $t$, we can use the constraint to obtain a refined interval assignment to the third variable vector. Thus, we can define three pruning operators as follows.
\begin{remark}
The precise definitions of the pruning operators should map the interval assigments on all variables to new assignments on all variables. For notational simplicity, in the pruning operators below we only list the assignments that are actually changed between inputs and outputs. For instance, the forward pruning operator only changes the values on $B_{\vec x_t}$. 
\end{remark}
\noindent{\bf Forward Pruning.} Given interval assignments on $\vec x_0$ and $t$, we compute a refinement of the interval assignments on $\vec x_t$. Figure~\ref{xtp} depicts the forward pruning operation. Formally, we define the following operator:
\begin{definition}[Forward Pruning]
Let $\vec y:[0,T]\times D\rightarrow D$ be the solution functions of an ODE system. Let $B_{\vec x_0}$, $B_{\vec x_t}$, and $I_{t}$ be interval assignments on the variables $\vec x_0$, $\vec x_t$, and $t$. We define the forward-pruning operator as:
$$\mathrm{Prune}_{\mathrm{fwd}}(B_{\vec x_t}, \vec y) = \Hull\Big(B_{\vec x_t}\cap \sharp \vec y(I_t, B_{\vec x_0})\Big).$$
\end{definition}
\begin{figure}
\begin{center}
\includegraphics[width=9cm]{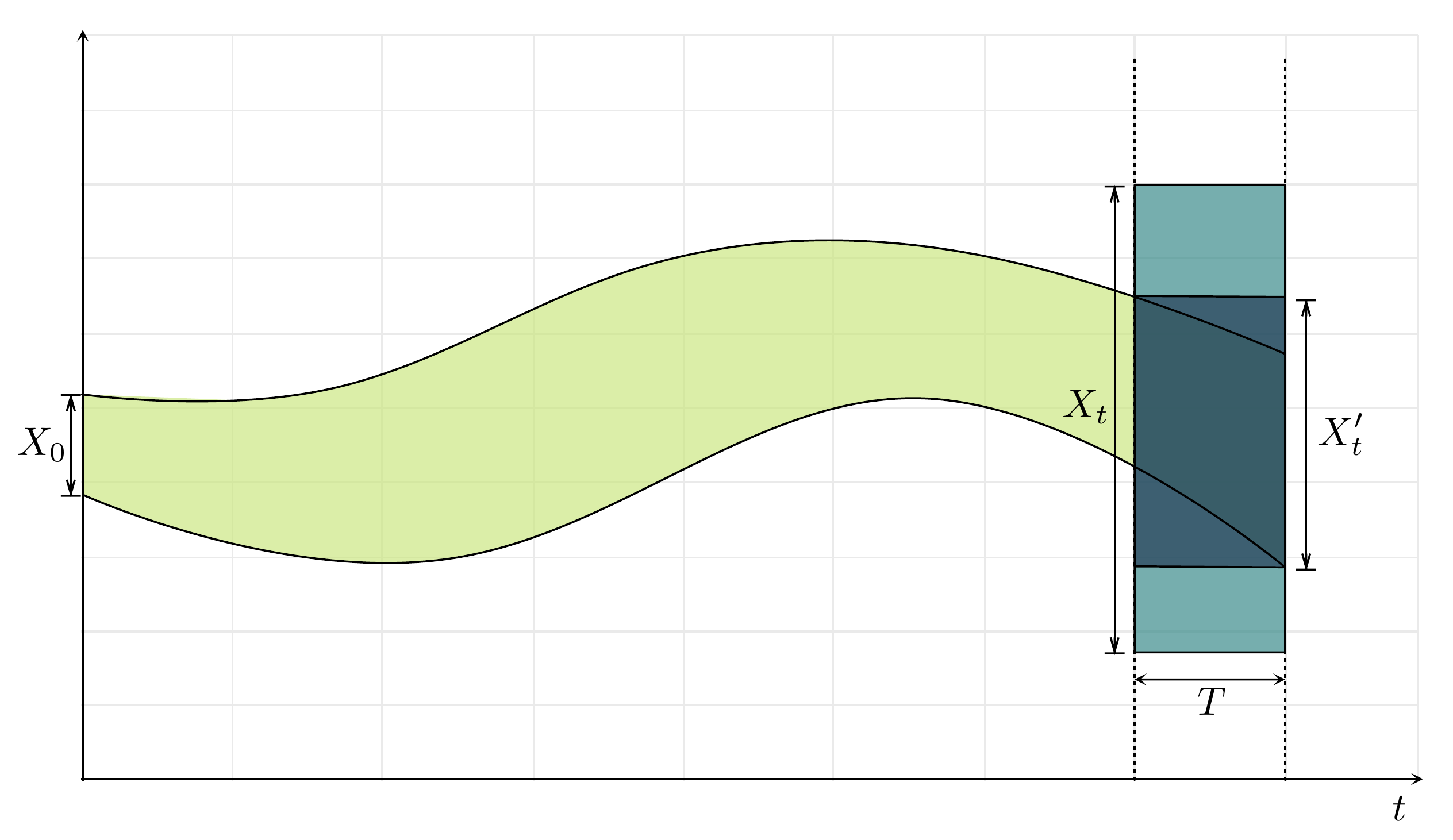}
\end{center}
\caption{Forward Pruning. $X_0$, $X_t$, $t$ represents the current interval assignments, and $X_t'$ is the refined interval assignment on $\vec x_t$ after pruning.}\label{xtp}
\end{figure}

\begin{algorithm}\label{alg:ForwardP}
\caption{$\prunefwd(\sharp{\vec{y}}, \BXz, \BXt, I_t)$}\label{forward}
\begin{algorithmic}[1]
  \State $\BXt' \gets \phi$\;
  \State $I_{{\Delta}t} \gets [I_t^l, I_t^l + {\varepsilon}]$\;
  \While{$I^u_{{\Delta}t} < I_t^u$}
      \State $\BXt' \gets \Hull(\BXt' \cup \sharp{\vec{y}}({{I_{{\Delta}t}}}, \BXz))$
      \State ${{I_{{\Delta}t}}} \gets {I_{{\Delta}t}} + {\varepsilon}$
  \EndWhile
  \State \Return{$\BXt \cap \BXt'$}
\end{algorithmic}
\end{algorithm}

\noindent{\bf Backward Pruning.} Given interval assignments on $\vec x_t$ and $t$, we can compute a refinement of the interval assignments on $\vec x_0$ using the reverse of the solution function. Figure~\ref{x0p} depicts backward pruning. Formally, we define the following operator:
\begin{definition}[Backward Pruning]
Let $\vec y:[0,T]\times D\rightarrow D$ be the solution functions of an ODE system, and let $\vec y_-$ be the reverse of $\vec y$. Let $B_{\vec x_0}$, $B_{\vec x_t}$, and $I_{t}$ be interval assignments on the variables $\vec x_0$, $\vec x_t$, and $t$. We define the backward-pruning operator as:
$$\mathrm{Prune}_{\mathrm{bwd}}(B_{\vec x_0}, \vec y) =\Hull\Big(B_{\vec x_0}\cap \sharp \vec y_-(I_t, B_{\vec x_t})\Big).$$
\end{definition}
\begin{figure}
\begin{center}
\includegraphics[width=9cm]{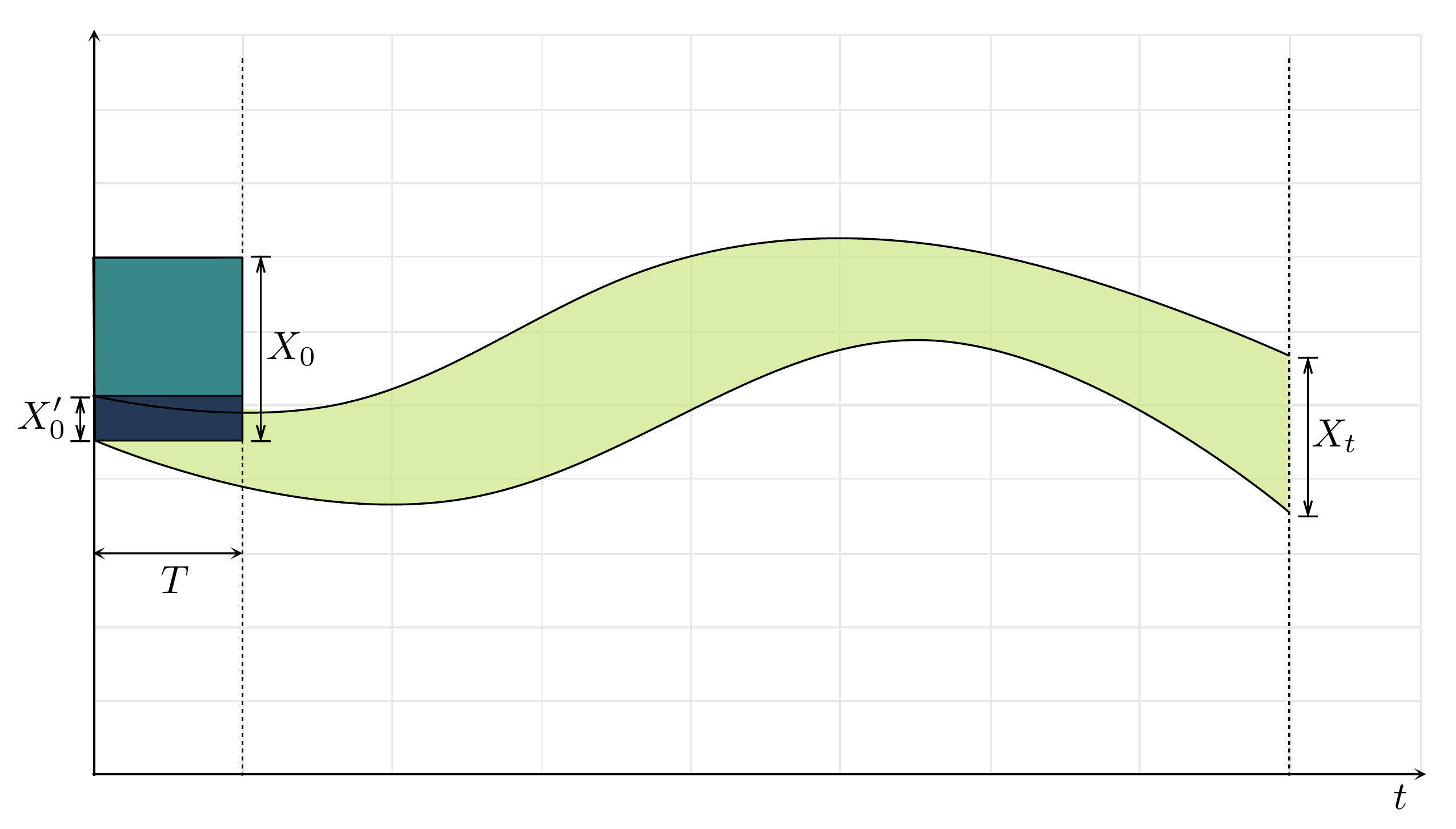}
\end{center}
\caption{Backward Pruning. $X_0$, $X_t$, $t$ represents the current interval assignments, and $X_0'$ is the refined interval assignment on $\vec x_0$ after pruning.
}\label{x0p}
\end{figure}

\noindent{\bf Time-Domain Pruning.} Given interval assignments on $\vec x_0$ and $\vec x_t$, we can also refine the interval assignment on $t$ by pruning out the time intervals that do not contain any $\vec x_t$ that is consistent with the current interval assignments on $\vec x_t$. Figure~\ref{tp} depicts time-domain pruning. Formally, we define the following operator:
\begin{definition}[Time-Domain Pruning]
Let $\vec y:[0,T]\times D\rightarrow D$ be the solution functions of an ODE system. Let $B_{\vec x_0}$, $B_{\vec x_t}$, $I_{t}$ be interval assignments on the variables $\vec x_0$, $\vec x_t$, and $t$. We define the time-domain pruning operator as:
$$\mathrm{Prune}_{\mathrm{time}}(I_{t}, \vec y) =\Hull\Big(I_{t}\cap \{I : \sharp \vec y(I, B_{\vec x_0})\cap  B_{\vec x_t} \not= \emptyset\}\Big).$$

\end{definition}
\begin{figure}
\begin{center}
\includegraphics[width=9cm]{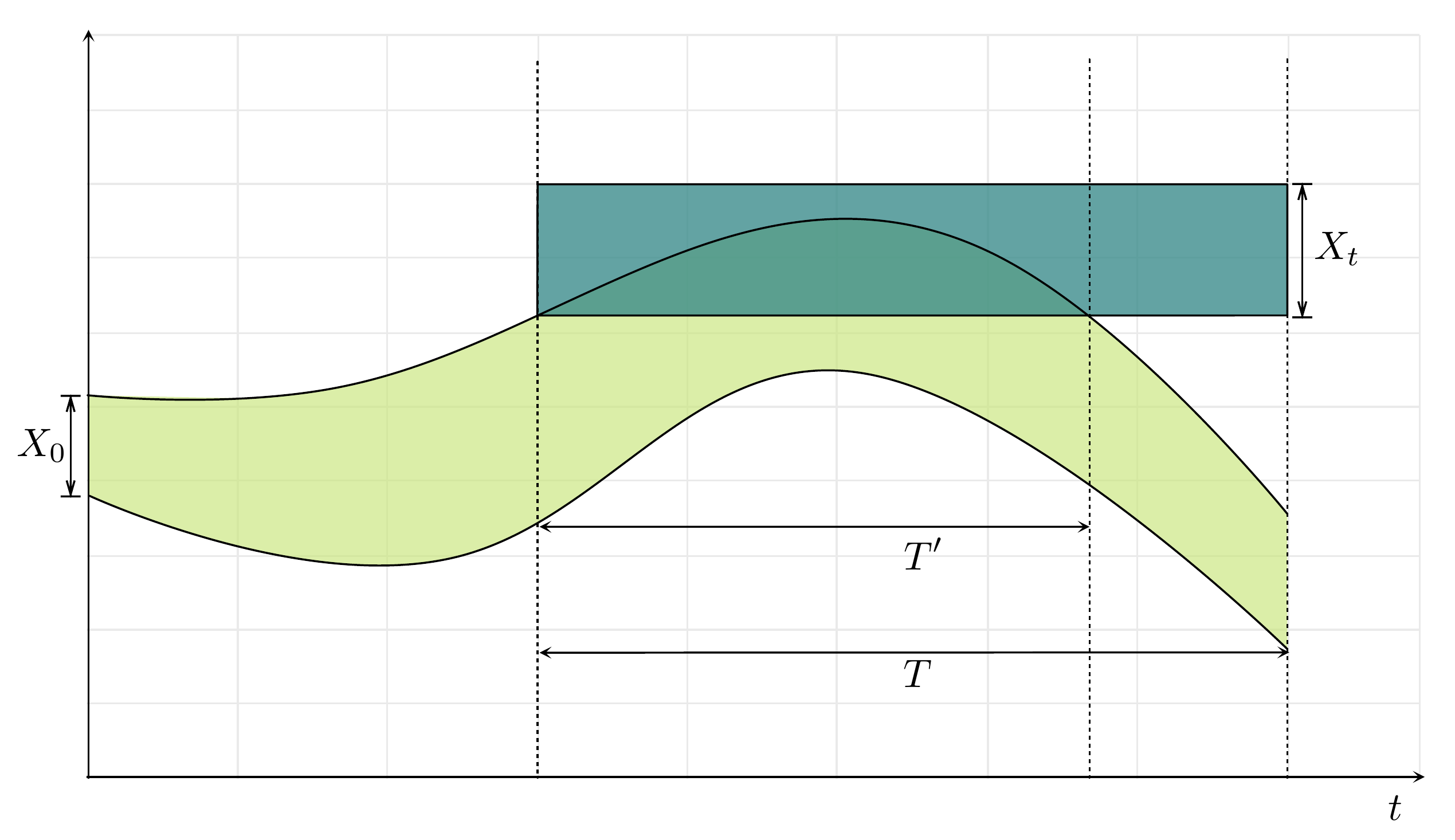}
\end{center}
\caption{Time-Domain Pruning.
 $X_0$, $X_t$, $t$ represents the current interval assignments, and $T'$ is the refined interval assignment on $t$ after pruning.
}\label{tp}
\end{figure}

Overall, the pruning algorithm on based on ODE constraints iteratively applies the three pruning operators until a fixed point on the interval assignments is reached.

\begin{algorithm}\label{alg:BackwardP}
\caption{$\prunebwd(\sharp{\vec{y}}, \BXz, \BXt, I_t)$}\label{backward}
\begin{algorithmic}[1]
  \State $\BXz' \gets \phi$
  \State ${{I_{{\Delta}t}}} \gets [I_t^l, I_t^l + {\varepsilon}]$
  \While{$I^u_{{\Delta}t} < I_t^u$}
      \State $\BXz' \gets \Hull(\BXz' \cup \sharp{\vec{y}_{-}}({{I_{{\Delta}t}}}, \BXt))$
      \State ${{I_{{\Delta}t}}} \gets {I_{{\Delta}t}} + {\varepsilon}$
  \EndWhile
  \State \Return{$\BXz \cap \BXz'$}
\end{algorithmic}
\end{algorithm}

We show the more detailed steps in the three pruning operations in Algorithm~\ref{basic}, \ref{forward}, \ref{backward}, and \ref{timed}.

\begin{algorithm}\label{alg:I_timeP}
\caption{$\prunetime(\sharp{\vec{y}}, \BXz, \BXt, I_t)$}\label{timed}
\begin{algorithmic}[1]
  \State $I_t' \gets \phi$
  \State $I_{{\Delta}t} \gets [I_t^l, I_t^l + {\varepsilon}]$
  \While{$I^u_{{\Delta}t} < I_t^u$}
      \State $\BXt' \gets \sharp{\vec{y}}({{I_{{\Delta}t}}}, \BXz)$
      \If{$\BXt' \cap \BXt \not = \phi$}
          \State $I_t' = \Hull(I_t' \cup {{I_{{\Delta}t}}})$
      \Else
          \State $I_{{\Delta}t} \gets I_{{\Delta}t} + {\varepsilon}$
      \EndIf
  \EndWhile
  \State \Return{$I_t'$}
\end{algorithmic}
\end{algorithm}

\begin{theorem}
The three pruning operators are well-defined.
\end{theorem}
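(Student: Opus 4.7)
The plan is to verify each of the three defining conditions (W1)--(W3) of Definition~\ref{well} for the forward, backward, and time-domain pruning operators. The underlying zero-constraint in our setting is the ODE relation, which I would phrase as $f(\vec{x}_0, t, \vec{x}_t) = \vec{x}_t - \vec{y}(t,\vec{x}_0) = \vec 0$, with the natural interval extension $\sharp f(B_{\vec{x}_0}, I_t, B_{\vec{x}_t}) = B_{\vec{x}_t} - \sharp \vec{y}(I_t, B_{\vec{x}_0})$. For backward pruning I would use the equivalent form $f_-(\vec{x}_t, t, \vec{x}_0) = \vec{x}_0 - \vec{y}_-(t,\vec{x}_t)$, appealing to the time-reversal equivalence $\vec{x}_t = \vec{y}(t,\vec{x}_0) \iff \vec{x}_0 = \vec{y}_-(t,\vec{x}_t)$ stated in (7), so that $f$ and $f_-$ share a zero-set on the region of interest.

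For forward pruning, I would argue as follows. Condition (W1) is immediate because $\mathrm{Prune}_{\mathrm{fwd}}$ returns a sub-box of $B_{\vec{x}_t}$ by definition of $\Hull$ of an intersection with $B_{\vec{x}_t}$. For (W3), if $\vec{x}_t \in B_{\vec{x}_t}$ comes from some genuine solution with $\vec{x}_0 \in B_{\vec{x}_0}$, $t \in I_t$, then by the enclosing property of $\sharp \vec{y}$ we have $\vec{x}_t \in \sharp \vec{y}(I_t, B_{\vec{x}_0})$, so $\vec{x}_t$ lies in the intersection and hence in the hull. For (W2), if the output is nonempty, pick any $\vec{a} \in B_{\vec{x}_t} \cap \sharp \vec{y}(I_t, B_{\vec{x}_0})$; then $\vec{a} - \vec{a} = \vec 0$ lies in $B_{\vec{x}_t}' - \sharp \vec{y}(I_t, B_{\vec{x}_0}) = \sharp f(B_{\vec{x}_0}, I_t, B_{\vec{x}_t}')$. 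The algorithmic refinement in Algorithm~\ref{forward}, which tiles $I_t$ by sub-intervals of width $\varepsilon$ and aggregates under $\Hull$, remains an over-approximation of $\sharp \vec{y}(I_t,B_{\vec{x}_0})$, so all three conditions lift to the implementation.

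For backward pruning I would run the same argument verbatim, but with $\sharp \vec{y}_-$ in place of $\sharp \vec{y}$ and $\vec{x}_0$ in place of $\vec{x}_t$, invoking the reverse-flow equivalence to transfer (W3) from the zero-set of $f$ to the zero-set of $f_-$. For time-domain pruning, (W1) is again immediate. For (W3), any real time $t$ that participates in a solution lies in a sub-interval $I$ with $\vec{y}(t,\vec{x}_0)\in \sharp \vec{y}(I,B_{\vec{x}_0})\cap B_{\vec{x}_t}$, so $t$ survives. For (W2), if the output is nonempty then some sub-interval $I_{\Delta t}$ was retained in Algorithm~\ref{timed}, meaning $\sharp \vec{y}(I_{\Delta t}, B_{\vec{x}_0}) \cap B_{\vec{x}_t} \ne \emptyset$, which gives $\vec 0 \in \sharp f(B_{\vec{x}_0}, I_t', B_{\vec{x}_t})$ by monotonicity of interval extensions in the time argument.

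The main obstacle I anticipate is notational rather than mathematical: Definition~\ref{well} is stated for a pruning operator attached to a single function $f$ and a single box $B$, while the ODE pruning operators act on three correlated variable blocks $(\vec{x}_0, t, \vec{x}_t)$ and only modify one of them. Making the argument airtight requires fixing once and for all the ``product box'' $B_{\vec{x}_0} \times I_t \times B_{\vec{x}_t}$ as the input $B$ of Definition~\ref{well}, treating the unchanged components as held fixed, and then showing that modifying a single component in the way prescribed by each operator satisfies (W1)--(W3) for the joint constraint $f = \vec 0$. Once this bookkeeping is set up, the three verifications reduce to direct applications of the enclosing property of interval extensions together with the fact that $\Hull$ preserves non-emptiness and containment.
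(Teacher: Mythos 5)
Your proposal follows essentially the same route as the paper's proof: reduce each operator to the equality constraint $\vec x_t - \vec y(t,\vec x_0)=\vec 0$, get (W1) from $\Hull(B_1\cap B_2)\subseteq B_1$, (W2) from the observation that a nonempty output forces a common point of $B_{\vec x_t}$ and $\sharp\vec y(I_t,B_{\vec x_0})$ (the paper phrases this contrapositively), and (W3) from the enclosing property of the interval extension, with backward and time-domain pruning handled by the same argument via $\vec y_-$. Your extra bookkeeping about treating $B_{\vec x_0}\times I_t\times B_{\vec x_t}$ as the single box of Definition~\ref{well} is a sound clarification of a point the paper handles only by a remark, but it does not change the argument.
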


\begin{proof}
We prove that the forward pruning operator is well-defined, and the proofs for the other two operators are similar. Note that the definitions of well-defined pruning are formulated for equality constraints compared to 0. Here we use the function $f = \vec y (t, \vec x_0)-\vec x_t$ in the pruning operator. (Strictly speaking $f$ is a function vector that evaluates to $\vec 0$ on points satisfying the ODE flow. Here for notational simplicity we just write $f$ as a single-valued function and compare with the scalar $0$.)

First, (W1) is satisfied because of the simple fact that for any boxes $B_1,B_2\in \mathbb{BF}$, we have $\Hull(B_1\cap B_2)\subseteq B_1$.

Next, suppose $0\not\in\sharp f(\mathrm{Prune}_{\mathrm{fwd}}(B_{\vec x_t},\vec y)-B_{\vec x_t})$. Then there does not exist any $\vec a_t\in \mathbb{R}^n$ that satisfies both $\vec a_t\in B_{\vec x_t}$ and $\vec a_t\in \mathrm{Prune_{\mathrm{fwd}}}(B_{\vec x_t}, \vec y)$. Since at the same time
$$\mathrm{Prune}_{\mathrm{fwd}}(B_{\vec x_t}, \vec y) = \Hull\Big(B_{\vec x_t}\cap \sharp \vec y(I_t, B_{\vec x_0})\Big)\subseteq B_{\vec x_t},$$
this requires that $\mathrm{Prune}_{\mathrm{fwd}}(B_{\vec x_t}, \vec y) = \emptyset$. Consequently (W2) is satisfied.

Third, note that $\sharp \vec y(I_t, B_{\vec x_0})$ is an interval extension of $\vec y$. Thus, for any $\vec a_t\in \mathbb{R}^n$ such that $\vec y(t, \vec x_0)$ for some $t\in I_t$ and $\vec x_0\in B_{\vec x_0}$, we have $\vec a_t\in \sharp\vec y(I_t, B_{\vec x_0})$. Following the definition of the pruning operator, we have $\vec a_t\in \mathrm{Prune}_{\mathrm{fwd}}(B_{\vec x_t}, \vec y)$. Thus, $B_{\vec x_t}\cap Z_f \subseteq \mathrm{Prune}_{\mathrm{fwd}}(B_{\vec x_t},f)$ and (W3) holds.
\end{proof}

\subsection{$\exists\forall^t$-Formulas and Low-Order Approximations}

For $\exists\forall$-formulas, if the universal quantification is only over the time variables, we can follow the trajectory and prune away the assignment on $\vec x_0$, $\vec x_t$, and $t$ that violates the constraints on the universally quantified time variable. In fact, although the extra quantification complicates the problem, the universal constraints improve the power of the pruning operations.

Here we focus on problems with one ODE system, which can be easily generalized. Let $\vec y$ denote the solution functions of an ODE system, we consider an $\exists\forall^t$-formula of the form
\begin{eqnarray}\label{forall}
\exists^X \vec x_0\exists^X \vec x_t\exists^{[0,T]} t \forall^{[0,t]} t'.\; \vec x_t = \vec y(t, \vec x_0)\wedge \varphi(\vec y (t', \vec x_0))
\end{eqnarray}
Note that the problems encoded as $\Sigma_2$-SMT formulas as listed in Section~\ref{encoding} are all of this form.

We consider $\varphi(\vec y(t', \vec x_0))$ as a special constraint on the $\vec x_0$ and $t$ variables. Using this constraint, we can further refine the three pruning operators as follows.
\begin{definition}[Pruning Refined by $\forall^t$-Constraints]
Let $\vec y:[0,T]\times D\rightarrow \mathbb{R}^n$ be the solution functions of an ODE system. Let $B_{\vec x_0}$, $B_{\vec x_t}$, and $I_{t}$ be interval assignments on the variables $\vec x_0$, $\vec x_t$, and $t$. Let $\varphi(\vec y(t', \vec x_0))$ be a constraint on the universally quantified time variable, as in (\ref{forall}).
We first define
\begin{multline*}\sharp\varphi(I_t, B_{\vec x_0})=\Hull(\{\vec a\in \mathbb{R}^n: \vec a = \vec y(t, \vec x_0), t\in I_t,\\ \vec x_0\in B_{\vec x_0},\mbox{ and }\varphi(\vec a)\mbox{ is true.}\})
\end{multline*}
and define $\sharp\varphi_-$ by replacing $\vec y$ with $\vec y_-$ in the definition above. The forward pruning operator with $\varphi$, written as $\mathrm{Prune}_{\mathrm{fwd}}^{\varphi}(B_{\vec x_t}, \vec y)$, is defined as
$$\Hull\Big(B_{\vec x_t}\cap \sharp \vec y(I_t, B_{\vec x_0})\cap\sharp\varphi(I_t,B_{\vec x_0})\Big)$$
Backward pruning $\mathrm{Prune}^{\varphi}_{\mathrm{bwd}}(B_{\vec x_0}, \vec y)$ is defined as
$$\Hull\Big(B_{\vec x_0}\cap \sharp \vec y_-(I_t, B_{\vec x_t})\cap\sharp\varphi_-(I_t,B_{\vec x_t})\Big).$$
Time-domain pruning $\mathrm{Prune}_{\mathrm{time}}^{\varphi}(I_{t}, \vec y)$ is defined as
$$\Hull\Big(I_{t}\cap \{I : \sharp \vec y(I, B_{\vec x_0})\cap  B_{\vec x_t}\cap\sharp\varphi(I_t,B_{\vec x_0}) \not= \emptyset\}\Big).$$
\end{definition}
In general, $\sharp \varphi$ can be computed by a recursive call to DPLL(ICP), by solving the $\Sigma_1$-SMT problem $\varphi(\vec x)$. In many practical applications, $\varphi$ is of some simple form such as $\vec a\leq \vec x_t\leq \vec b$, in which case simple pruning is shown in Figure~\ref{inv}.
\begin{figure}
\begin{center}
\includegraphics[width=9cm]{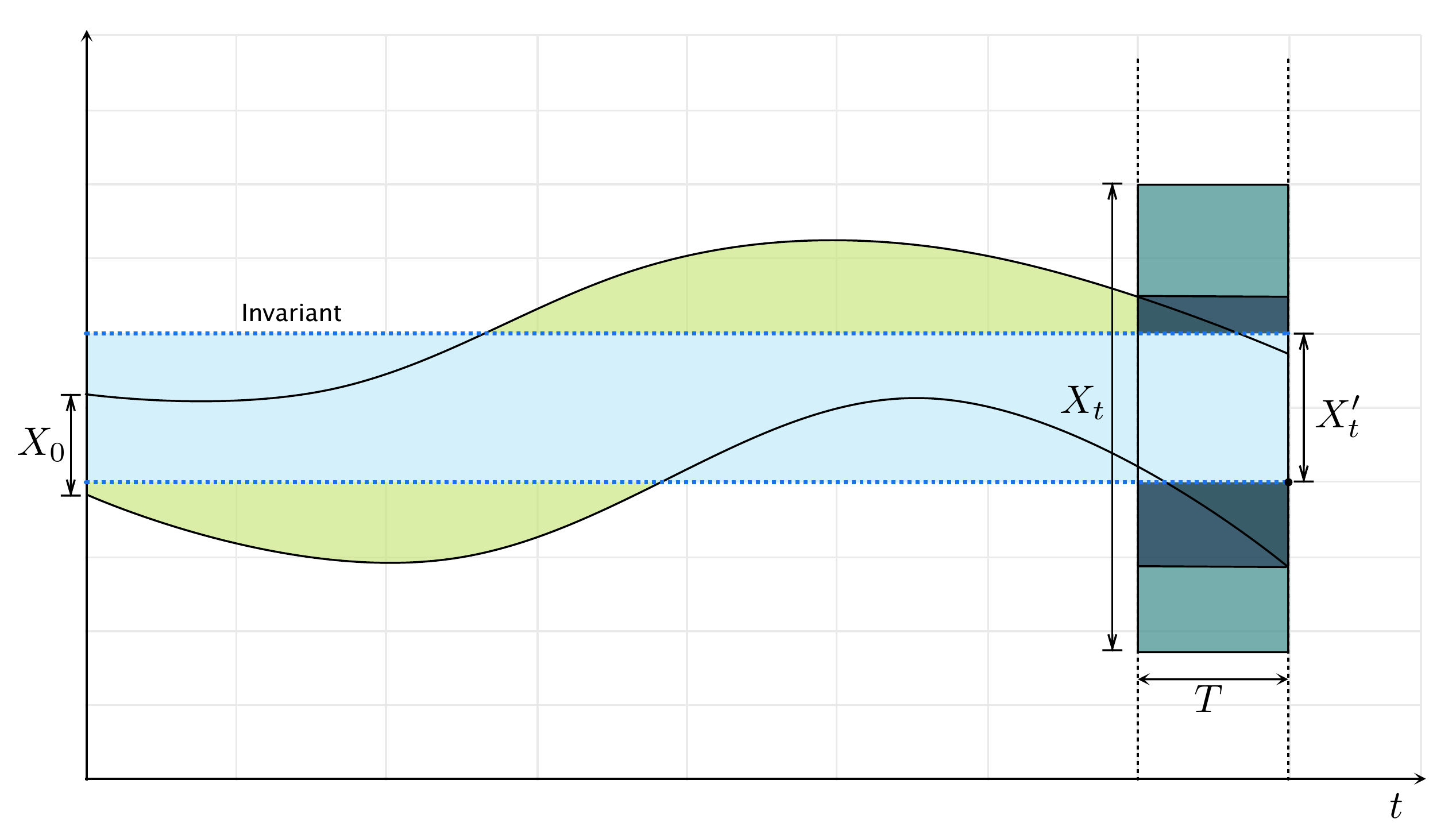}
\end{center}
\caption{Pruning with $\forall^t$-Constraints}\label{inv}
\end{figure}
Another useful heuristic in ODE pruning is to bound the range of the derivatives for a vector space specified by $\vec g$. Suppose for any time $t\in[0,T]$, the derivatives $\vec g$ are bounded in $[\vec l_g, \vec u_g]$. Then by the Picard-Lindel\"of representation, we have
$$\vec x_t = \int_{0}^t \vec g(\vec y(s,\vec y_0))ds + \vec y_0\in [0, T]\cdot [\vec l_g, \vec u_g]+B_{\vec x_0}$$
We can use this formula to perform preliminary pruning on $\vec x_t$, which is especially efficient when combined with $\forall^t$-constraints. Figure~\ref{taylor} illustrates this pruning method.
\begin{figure}
\begin{center}
\includegraphics[width=9cm]{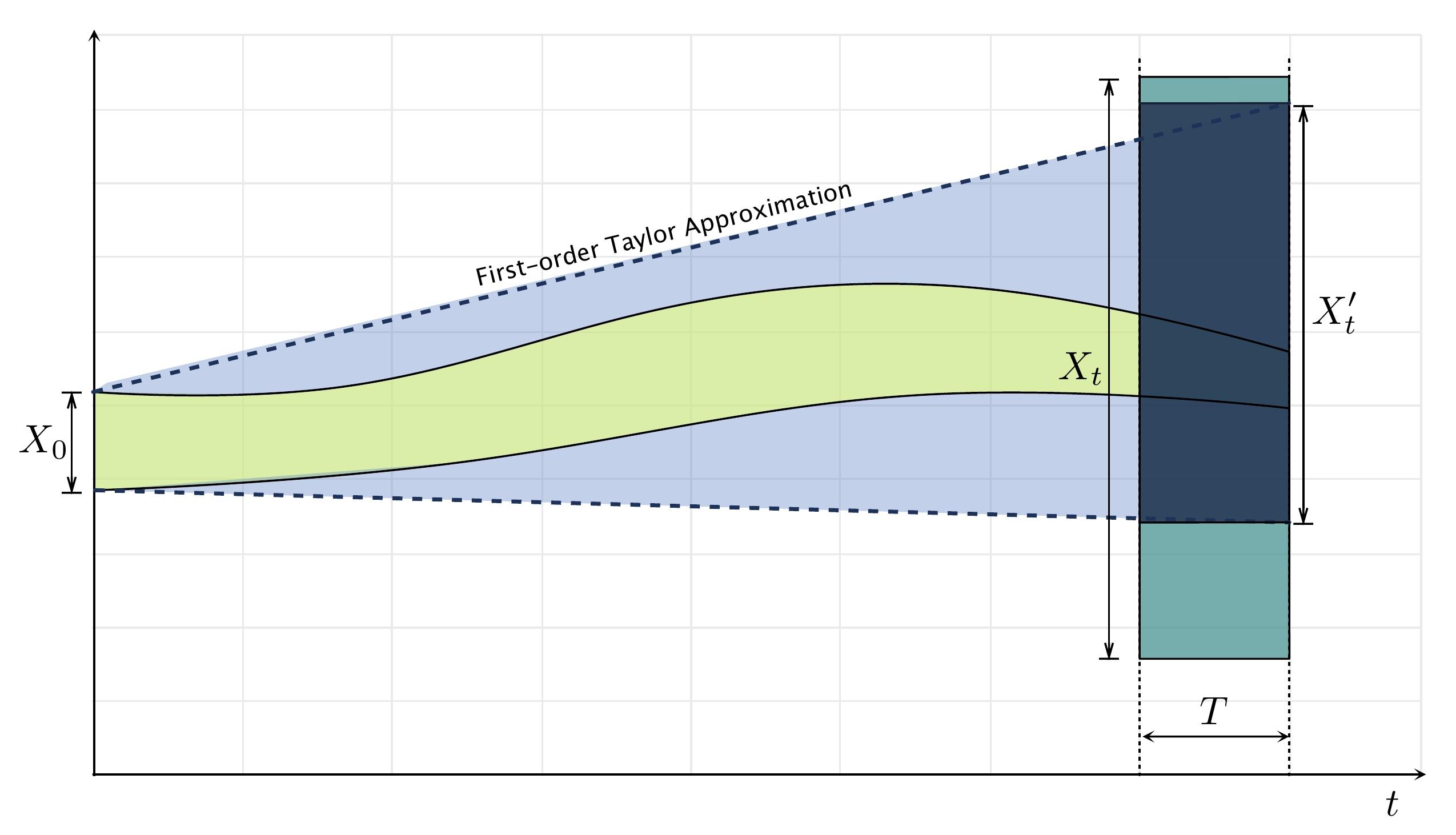}
\end{center}
\caption{Pruning with Low-Order Taylor Approximations}\label{taylor}
\end{figure}

\section{Experiments}\label{experiments}

Our tool {\sf dReal} implements the procedures we studied for solving SMT formulas with ODEs. It is built on several existing packages, including {\sf opensmt}~\cite{DBLP:conf/tacas/BruttomessoPST10} for the general DPLL(T) framework, {\sf realpaver}~\cite{DBLP:journals/toms/GranvilliersB06} for ICP, and {\sf CAPD}~\cite{capd} for computing interval-enclosures of ODEs. The tool is open-source at~\url{http://dreal.cs.cmu.edu}.  All benchmarks and data shown here are also available on the tool website.
\newcommand{\hmodel}[2]{\href{http://dreal.cs.cmu.edu/#1}{#2}}
{\small
\begin{table}[!th]
  \centering
  \small
  \begin{tabular}{l|r|r|r|r|r|r|r|r}
    \hline
    \hline
    P    & \#M& \#D & \#O & \#V  & delta  & R       & Time(s) & Trace \\
    \hline
    \hline
    AF   & 4     & 3        & 20     & 44      & 0.001     & S & 43.10    & 90K      \\
    AF   & 8     & 7        & 40     & 88      & 0.001     & S & 698.86   & 20M      \\
    AF   & 8     & 23       & 120    & 246     & 0.001     & S & 4528.13  & 59M      \\
    AF   & 8     & 31       & 160    & 352     & 0.001     & S & 8485.99  & 78M      \\
    AF   & 8     & 47       & 240    & 528     & 0.001     & S & 15740.41 & 117M     \\
    AF   & 8     & 55       & 280    & 616     & 0.001     & S & 19989.59 & 137M     \\
    \hline
    \hline
    CT     & 2     & 2        & 15     & 36      & 0.005   & S & 345.84   & 3.1M      \\
    CT     & 2     & 2        & 15     & 36      & 0.002   & S & 362.84   & 3.1M      \\
    \hline
    \hline
    EO     & 3     & 2        & 18     & 42      & 0.01    & S & 52.93    & 998K      \\
    EO     & 3     & 2        & 18     & 42      & 0.001   & S & 57.67    & 847K      \\
    EO     & 3     & 11       & 72     & 168     & 0.01    & U & 7.75     & --       \\
    \hline
    \hline
    BB & 2     & 10       & 22     & 66      & 0.01        & S & 0.25     & 123K       \\

    BB & 2     & 20       & 42     & 126     & 0.01        & S & 0.57     & 171K       \\
    BB & 2     & 20       & 42     & 126     & 0.001       & S & 2.21     & 168K       \\
    BB & 2     & 40       & 82     & 246     & 0.01        & U & 0.27     & ----       \\
    BB & 2     & 40       & 82     & 246     & 0.001       & U & 0.26     & ----       \\
    \hline
    \hline
    D1   & 3     & 2        & 9      & 24      & 0.1       & S & 30.84    & 72K      \\
    DU   & 3     & 2        & 6      & 16      & 0.1       & U &  0.04    & --      \\
    \hline
    \hline
  \end{tabular}
  \caption{\small
    \#M = Number of modes in the hybrid system,
    \#D = Unrolling depth,
    \#O = Number of ODEs in the unrolled formula,
    \#V = Number of variables in the unrolled formula,
    R = Bounded Model Checking Result (delta-SAT/UNSAT)
    Time = CPU time (s),
    Trace = Size of the ODE trajectory,
    AF = Atrial Filbrillation,
    CT = Cancer Treatment,
    EO = Electronic Oscillator,
    BB = Bouncing Ball with Drag,
    D1,DU = Decay Model.
}\label{tbl:exp}
\end{table}
}

All experiments were conducted on a machine with a 3.4GHz octa-core Intel Core i7-2600 processor and 16GB RAM, running 64-bit Ubuntu 12.04LTS. Table~\ref{tbl:exp} is a summary of the running time of the tool on various SMT formulas generated from bounded model checking hybrid systems. The formulas typically contain a large number of variables and nonlinear ODEs. 

\begin{figure}[h!]
\begin{center}
\includegraphics[width= 10cm]{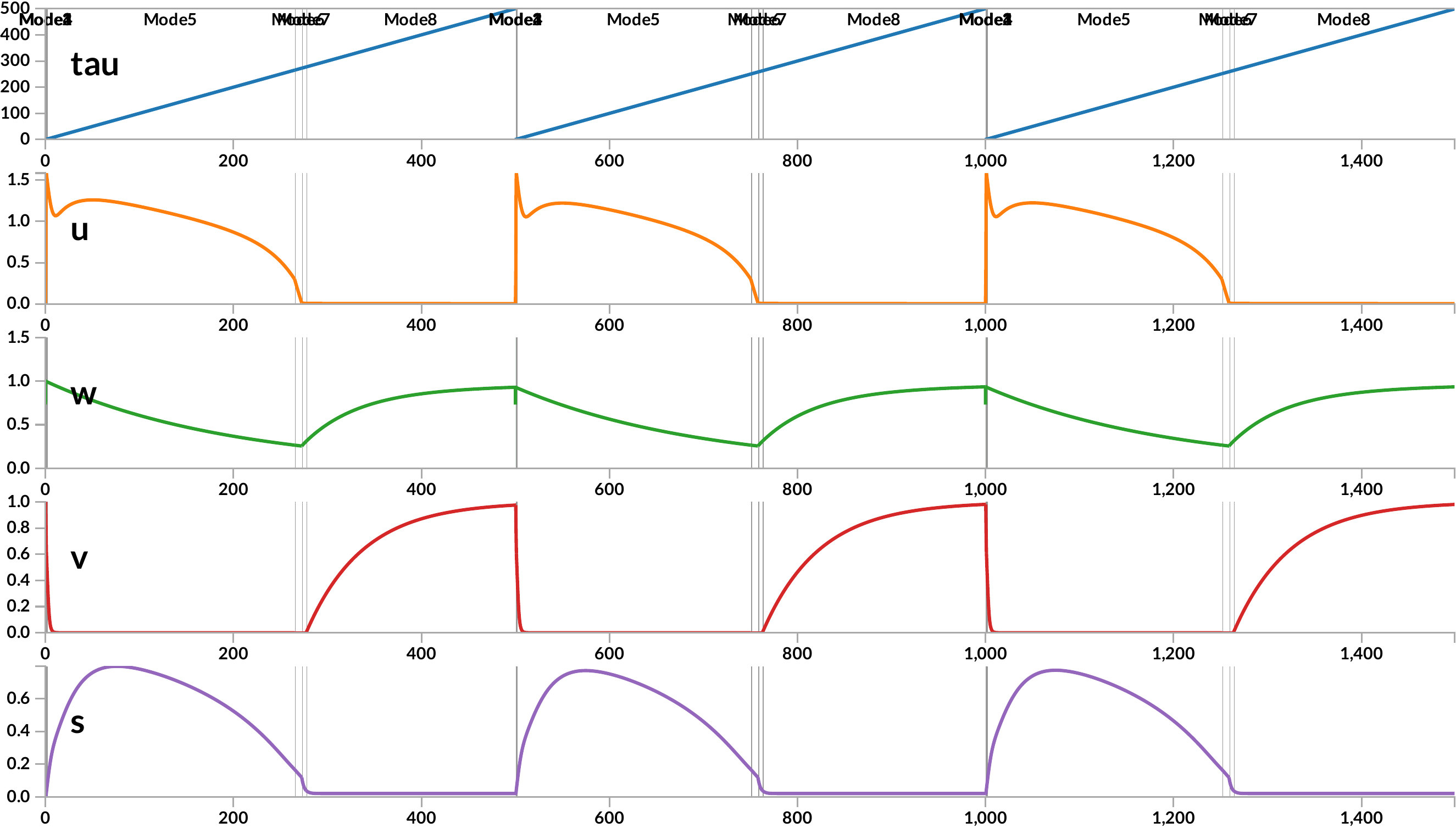}
\includegraphics[width=10cm]{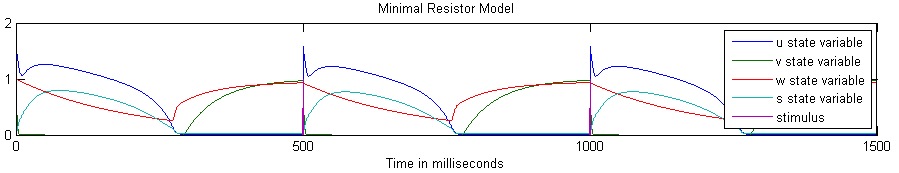}
\end{center}
\caption{Above: Witness for the AF model at depth 23 and 1500 time units. Below: Experimental simulation data.}\label{cardi}
\end{figure}
The AF model as we show in Table~\ref{tbl:exp} is obtained from~\cite{DBLP:conf/cav/GrosuBFGGSB11}. It is a precise model of atrial fibrillation, a serious cardiac disorder. The continuous dynamics in the model concerns four state variables and the ODEs are highly nonlinear, such as:{\small
\begin{eqnarray*}
\frac{du}{dt} &=& e + (u-\theta_v)(u_u-u ) v g_{fi} + wsg_{si}-g_{so}(u)\\
\frac{ds}{dt} &=& \displaystyle\frac{g_{s2}}{(1+e^{-2k(u-us)})} -  g_{s2}s\\
\frac{dv}{dt} &=& -g_v^+\cdot v \ \ \ \ \frac{dw}{dt} = -g_w^+\cdot w
\end{eqnarray*}
}
The exponential term on the right-hand side of the ODE is the sigmoid function, which  often appears in modelling biological switches. On this model, our tool is able to perform a depth-55 unrolling, and solve the generated logic formula. Such a formula contains 280 nonlinear ODEs of the type shown here, with 616 variables. The computed trace from {\sf dReal} suggests a witness of the reachability property that can be confirmed by experimental simulation. Figure~\ref{cardi} shows the comparison between the trace computed from bounded model checking and the actual experimental simulation trace.
\begin{figure}[h!]
\begin{center}
\includegraphics[width=10cm]{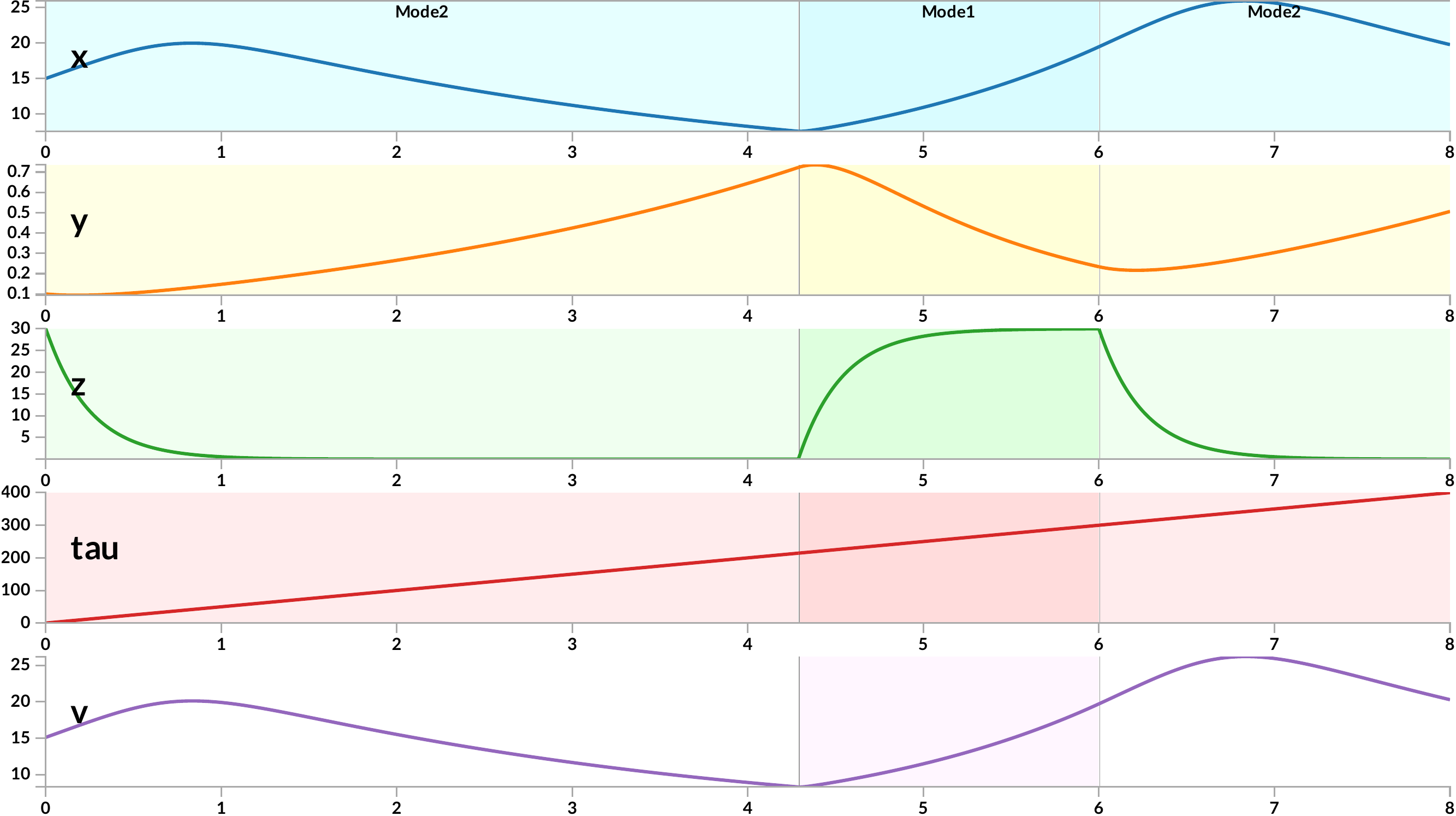}
\includegraphics[width=10cm]{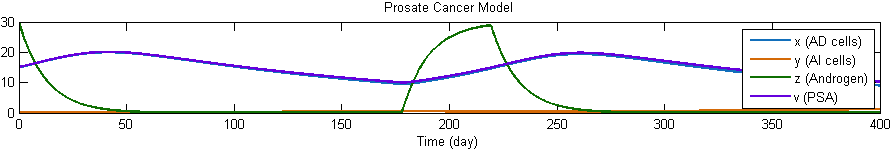}
\end{center}
\caption{Above: Witness computed for the CT model at depth 3 and 500 time units. Below: Experimental simulation data.}\label{ct}
\end{figure}

The CT model represents a prostate cancer treatment model that contains nonlinear ODEs such as following:
{\small
\begin{eqnarray*}
\frac{dx}{dt} &=& (\alpha_x
(k_1+(1-k_1)\frac{z}{z+k_2}\\
& &-\beta_x( (1-k_3)\frac{z}{z+k_4}+k_3)) - m_1(1-\frac{z}{z_0}))x + c_1 x\\
\frac{dy}{dt} &=& m_1(1-\frac{z}{z_0})x+(\alpha_y (1- d\frac{z}{z_0}) - \beta_y)y+c_2y\\
\frac{dz}{dt} &=& \frac{-z}{\tau} + c_3z\\
\frac{dv}{dt} &=& (\alpha_x
(k_1+(1-k_1)\frac{z}{z+k_2}-\beta_x(k_3+(1-k_3)\frac{z}{z+k_4}))\\
& &- m_1(1-\frac{z}{z_0}))x + c_1 x + m_1(1-\frac{z}{z_0})x\\
& &+(\alpha_y (1- d\frac{z}{z_0}) - \beta_y)y+c_2y
\end{eqnarray*}
}The EO model represents an electronic oscillator model that contains nonlinear ODEs such as the following:
{\small
\begin{eqnarray*}
\frac{dx}{dt} &=& - ax \cdot sin(\omega_1 \cdot \tau)\\
\frac{dy}{dt} &=& - ay \cdot sin( (\omega_1 + c_1) \cdot \tau) \cdot sin(\omega_2)\cdot 2\\
\frac{dz}{dt} &=& - az \cdot sin( (\omega_2 + c_2) \cdot \tau) \cdot cos(\omega_1)\cdot 2\\
\frac{\omega_1}{dt} &=& - c_3\cdot \omega_1\ \ \ \frac{\omega_2}{dt} = -c_4\cdot\omega_2\ \ \ \frac{d\tau}{dt} = 1
\end{eqnarray*}
}

The other models are standard simple nonlinear models (for instance, bouncing ball with nonlinear friction), on which our tool has no difficulty in solving.  

\section{Conclusion}\label{conclude}

In this paper we have studied SMT problems over the real numbers with ODE constraints. We have developed $\delta$-complete algorithms in the DPLL(ICP) framework, for both the standard SMT formulas that are purely existentially quantified, as well as $\exists\forall$-formulas whose universal quantification is restricted to the time variables. We have demonstrated the scalability of our approach on nonlinear SMT benchmarks. We believe that the proposed decision procedures can scale on nonlinear problems and can serve as the underlying engine for formal verification of realistic hybrid systems and embedded software.

\noindent{\bf Ackowledgements.} We are grateful for many important suggestions from Jeremy Avigad, Andreas Eggers, and Martin Fr\"anzle. In particular,  we formulated the notion of $\delta$-regular interval extensions to avoid technical difficulties that Eggers and Fr\"anzle pointed out to us. We thank the anonymous referees for various important comments. \bibliographystyle{abbrv}
\bibliography{tau}

\end{document}